\newtheorem{theorem}{Theorem}[section]
\newtheorem{definition}[theorem]{Definition}
\begin{document} \title[On Thermodynamic Properties of Circumacenes Through $M(A)$]{On Thermodynamic Properties of Circumacenes Through Zagreb Indices} 
\date{} \author [S. Shafiee Alamoti and Z. Aliannejadi]{S. Shafiee Alamoti and Z. Aliannejadi} 
\begin{abstract}
We explore the fascinating world of Circumacenes, a unique family of organic molecules. Our research demonstrarates the calculation of essential topological indices, including the first and second-Zagreb indices, as well as related generalized indices, and their relationship with the structure of these molecules. We also indicate how these indices can help predict crucial thermodynamic properties, such as gap energy and electron affinity energy, using an innovative method called the Topological Indices Method $(TIM)$. 

\end{abstract} 
\maketitle \textbf{Mathematics Subject Classification (2010): 05C35; 05C07.}\\ 
\textbf{Key words and phrases:} Electron Affinity Energy,\; Gap Energy,\; First-Zagreb Index,\; Second-Zagreb Index, \; Circumacenes.
\section{Introduction} 

The capacity of data restoration, data transition, and electronic component miniaturization is crucial for modern knowledge. The smaller size of electronic components not only enhances the speed of the process but also reduces energy consumption, which can be achieved through Nano-electronics. Although the electronic industry has matured technologically, industrially, and financially, the limitations of minifying electronic components on Nano-scale have led to the development of molecular electronics, a branch of Nano-technology that uses organic molecules. Aromatic hydrocarbons, such as benzene, provide suitable environments for electron transition due to p orbitals, upper and lower electron clouds, and resonance phenomenon [9,15,20,23]. Electronic circuits and logic gates can be designed by joining these hydrocarbons. One family of organic molecules that are focused on in molecular electronics is Circumacenes, which have the chemical formula of $(C _{(\frac{8i+16}{3})}H_{(\frac{2i+22}{3})})$ [13,25]. Recognizing and examining this family of Nano-structures require significant time and money. Therefore, the Topological Indices Method $(TIM)$ is a cheap and useful approach to predict electronic features [1-8,27,28]. A topological descriptor is a single number that represents a chemical structure in graph-theoretical terms via the molecular graph. If it correlates with a molecular property, it is called a topological index and is used to understand physicochemical properties of chemical compounds. Topological indices are interesting since they capture some of the properties of a molecule in a single number. Numerous topological indices have been developed and examined, with the first one introduced by Wiener. Wiener's approach involved using the sum of all shortest-path distances of a molecular graph to model the physical properties of Alkanes [17].\\

This paper focuses on connected, finite, and undirected graphs, where the degree of a vertex $u$ in a graph $A$ represented by $d_u (A)$ is the number of edges that connect it. In $1972$, Gutman and Trinajstic [18] introduced the first and second kind of Zagreb indices, which were used to examine the structure dependence of the total $\varphi$-electron energy. The first and second-Zagreb indices of a connected graph $A$ for every arbitrary adjacent vertices $u\sim v$ denoted by $M_1(A)$ and $M_2(A)$, respectively, are defined as follows:
\begin{center}
$M_1(A)=\sum_{u\sim v \in E(A)}(d_u+d_v)$
\end{center}
and
\begin{center}
$M_2(A)=\sum _{u\sim v \in E(A)}(d_ud_v)$.
\end{center}
The general sum connectivity index, $H_\alpha(A)$, is defined as [12] 
\begin{center}
$H_\alpha(A)=\sum_{u\sim v}(d_u+d_v)^\alpha$, $\hspace*{1cm}H_0(A)=|E(A)|$,
\end{center}
where $\alpha$ is an arbitrary real number. Some special cases include:\\
the first-Zagreb index, $M_1(A)=H_1(A)$ [18],\\
the sum connectivity index $SC(A)=H_{-\frac{1}{2}}(A)$ [11] and\\
the harmonic index $H(A)=2H_{-1}(A)$ [24].\\\\
The general Randi\'{c} index $R_\alpha$ of graph $A$ is a graph invariant defined as [10],
\begin{center}
$R_\alpha(A)=\sum_{u\sim v}(d_ud_v)^\alpha$, $\hspace*{1.2cm}R_0(A)=|E(A)|$,
\end{center}
where $\alpha$ is an arbitrary real number. When $\alpha=1$, then the second-Zagreb index $M_2(A)=R_1(A)$ is obtained [19] and for $\alpha=-\frac{1}{2}$ the Randi\'{c} index $R(A)=R{-\frac{1}{2}}(A)$ is obtained [22].
For $\alpha=-1$ the modified second-Zagreb index, $M_2^*(A)$, defined in [26] is obtained (see also [21]).

\begin{definition}
The sum-degree of a ring in graph $A$ is defined as the sum of total degrees of adjacent vertices in an edge of a ring, that is $\sum_{u\sim v \in R(A)} (d_u+d_v)$ and it is shown by $Sd_r(A)$.
\end{definition}
\begin{definition}
The times-degree of a ring in graph $A$ is defined as the multiplication of total degrees of adjacent vertices in an edge of a ring, that is $\sum_{u\sim v \in R(A)} (d_ud_v)$ and it is shown by $Td_r(A)$.
\end{definition}

\section{Methods and Theory}
In this section, we are going to compute the first-Zagreb index, the second-Zagreb index and some important indices for the family of Circumacenes $(C_{(\frac{8i+16}{3})}H_{(\frac{2i+22}{3})})$ molecule, (FIGURE 1).

\begin{figure}[H] 
\begin{center}
\includegraphics[scale=0.6]{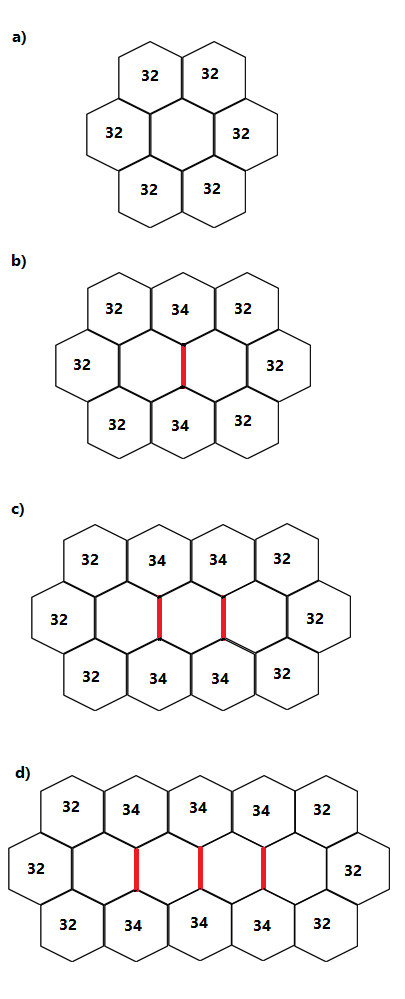}
\caption{$Linear$ $Circumacenes$ $Molecular$ $Graph$}
\label{fig:1} 
\end{center} 
\end{figure}

\begin{theorem}
If $A_i$ for $i\in \{1,2,...\}$ are simple molecular graphs of Circumacenes $(C_{(\frac{8i+16}{3})}H_{(\frac{2i+22}{3})})$, then $M_1(A_i)=62i+94$.
\end{theorem}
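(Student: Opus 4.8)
The plan is to turn the edge sum defining $M_1$ into a purely degree-counting problem. First I would record the elementary identity
\[
M_1(A)=\sum_{u\sim v\in E(A)}(d_u+d_v)=\sum_{u\in V(A)}d_u^{2},
\]
which holds because each vertex $u$ is an endpoint of exactly $d_u$ edges and hence contributes $d_u$ to the edge sum $d_u$ times. This lets me compute $M_1(A_i)$ as soon as I know how many vertices of each degree $A_i$ has.

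Reading the molecular graph of Figure~1, I would observe that every vertex of the Circumacene skeleton has degree $2$ or $3$: the perimeter carbons that carry a hydrogen have degree $2$, while the ring-fusion carbons have degree $3$. Writing $n_2$ and $n_3$ for the numbers of degree-$2$ and degree-$3$ vertices, the identity above gives
\[
M_1(A_i)=4n_2+9n_3 .
\]
Equivalently, I could partition $E(A_i)$ by the degree pair of its endpoints into classes $E_{2,2},E_{2,3},E_{3,3}$ and use $M_1(A_i)=4|E_{2,2}|+5|E_{2,3}|+6|E_{3,3}|$; the two bookkeeping schemes are interchangeable, and I would pick whichever the figure makes easier to count.

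The heart of the proof — and the step I expect to be the main obstacle — is to express $n_2$ and $n_3$ as explicit functions of $i$ from the repeating structure of the family. I would do this inductively: passing from $A_i$ to $A_{i+1}$ lengthens the inner acene by one hexagonal ring and extends the surrounding carbon belt accordingly, and I would check that this fixed local modification increases $n_2$ by a constant and $n_3$ by a constant, so that $M_1(A_{i+1})-M_1(A_i)$ is independent of $i$. The care needed here is entirely at the growth boundary: some perimeter carbons that were degree $2$ in $A_i$ acquire a new neighbor and become degree $3$ in $A_{i+1}$, and the newly created edges must be counted without duplicating edges already present. Once this accounting is done correctly I expect to find increments $\Delta n_2=2$ and $\Delta n_3=6$, giving $\Delta M_1 = 4\cdot 2 + 9\cdot 6 = 62$, and closed forms $n_2=2i+10$, $n_3=6i+6$.

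Finally I would anchor the induction with the smallest member of the family, verifying by direct inspection that $M_1(A_1)=156$. Substituting the degree counts gives
\[
M_1(A_i)=4(2i+10)+9(6i+6)=62i+94,
\]
and the per-step increment $M_1(A_{i+1})-M_1(A_i)=62$ together with the base value $156=62\cdot1+94$ confirms the closed form for all $i$. The arithmetic is immediate; all the substance lies in reading the degree sequence off the structure correctly.
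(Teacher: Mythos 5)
Your proposal is correct, and it takes a genuinely different route from the paper. You reduce everything to the vertex form of the first Zagreb index, $M_1(A)=\sum_{u\in V(A)}d_u^2$, take a degree census ($n_2=2i+10$, $n_3=6i+6$, every vertex having degree $2$ or $3$ in the hydrogen-suppressed graph), and get $4(2i+10)+9(6i+6)=62i+94$, anchored at $M_1(A_1)=156$ with constant increment $\Delta M_1=62$ checked at the growth boundary. The paper instead works with the edge form: it partitions $E(A_i)$ by membership in outer rings, classifies those rings by their ring sum-degree ($Sd_r=32$ or $34$), adds the edges outside the outer rings (each with $d_u+d_v=6$), and subtracts the edges shared by adjacent outer rings to undo double counting, arriving at $6\cdot 32+(2i-2)\cdot 34+(i-1)\cdot 6-(2i+4)\cdot 6=62i+94$; it verifies $i=1,2,3$ explicitly and then simply asserts the general cardinalities $|OR_1(A_i)|=6$, $|OR_2(A_i)|=2i-2$, $|\acute{E}(A_i)|=i-1$, $|\acute{\acute{E}}(A_i)|=2i+4$, so its advertised ``induction'' is never actually executed as an inductive step --- your boundary-increment argument is, if anything, closer to a genuine induction than what the paper does. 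Your route buys simplicity and a built-in consistency check: since exactly the degree-$2$ carbons carry hydrogens, $n_2$ must equal the hydrogen count, which pins down $n_2=2i+10$ independently of reading the figure; the paper's ring decomposition buys reusability, since the same partition is recycled verbatim in Theorems 2.2--2.4 for $HM$, $SC$, $H$, $R$ and $M_2^*$. One indexing caveat you inherit from the paper rather than introduce: both your proof and the paper's proof treat coronene as $A_1$ (so $M_1=156$), whereas the chemical formula in the theorem statement and Table 1 place coronene at $i=7$ (with $M_1=528$); the closed form $62i+94$ is the same either way, only the meaning of $i$ shifts, so this does not affect the correctness of your argument relative to the paper's own conventions in the proof.
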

\begin{proof}
Consider that $A_i$ for $i\in \{1,2,...\}$ are simple molecular graphs of Circumacenes family. These graphs are formed by some outer and inner rings, in which for each graph $A_i$, $i\in \{1,2,...\}$, there are $nr_i=3i+4$ rings, in which there are $no_i=2i+4$ outer rings. We divide $E(A_i)$, $i\in \{1,2,...\}$ into three parts. The first part includes outer rings $(OR)$, the second part includes edges that do not belong to outer rings and the third part includes edges that are common among edges in outer rings. Now, we redivide the outer rings set $(OR(A_i))$, $i\in \{1,2,...\}$ into two partitions according to the sum-degree of rings $(Sd_r)$. Hence, the sets of $E(A_i)$, $i\in \{1,2,...\}$ are as follow:  
\begin{center}
$OR_1(A_i)=\{OR \in OR(A_i), i\in \{1,2,...\}| Sd_r(A_i)=32\}$,
\end{center}
\begin{center}
$OR_2(A_i)=\{OR \in OR(A_i), i\in \{1,2,...\}| Sd_r(A_i)=34\}$,
\end{center}
\begin{center}
$\acute{E}(A_i)=\{e=uv\in E(A_i)|e=uv \notin OR_1(A_i)\cup OR_2(A_i), i\in \{1,2,...\}, \sum _{u\sim v\in \acute{E}(A_i)}(d_u+d_v)=6\}$
\end{center}
and
\begin{center}
$\acute{\acute{E}}(A_i)=\{e=uv\in E(A_i)|e=uv \in \bigcap_{i=1}^n OR_1(A_i), i\in \{1,2,...\}, \sum _{u\sim v\in {\acute{\acute{E}}(A_i)}}(d_u+d_v)=6\}$
\end{center}
We start the proof by induction on the number $no_i$.\\
For $i=1$, there are $no_1=6$ outer rings that all of them belong to $OR_1(A_1)$, there are no edges in $\acute{E}(A_1)$ and there are $6$ common edges in outer rings that belong to $\acute{\acute{E}}(A_1)$ and are equal to the number of outer rings $(no_1)$. So, in graph $(A_1)$, 
\begin{center}
$M_1(A_1)=\sum_{u\sim v\in E(A_1)} (d_u+d_v)=6OR_1(A_1)+0OR_2(A_1)+0\acute{E}(A_1)-6\acute{\acute{E}}(A_1)=$
\end{center}
\begin{center}
$6\sum_{u\sim v\in OR_1(A_1)}(d_u+d_v)+0\sum_{u\sim v\in OR_2(A_1)}(d_u+d_v)+$
\end{center}
\begin{center}
$0\sum_{u\sim v \in \acute{E}(A_1)}(d_u+d_v)-6\sum_{u\sim v \in \acute{\acute{E}}(A_1)}(d_u+d_v)=$
\end{center}
\begin{center}
$6\times 32-6\times 6=156$.
\end{center}
If $i=2$, then there are $no_2=8$ outer rings in graph $A_2$. The difference in the number of outer rings in $A_1$ and $A_2$ is equal to two outer rings, so that these two additional outer rings belong to $OR_2(A_2)$. Besides, there is an edge $e=uv\in\acute{E}(A_2)$ and also $8(e=uv)\in \acute{\acute{E}}(A_2)$. Therefore, we can see easily that 
\begin{center}
$M_1(A_2)=\sum_{u\sim v\in E(A_2)} (d_u+d_v)=6OR_1(A_2)+2OR_2(A_2)+\acute{E}(A_2)-8\acute{\acute{E}}(A_2)=$
\end{center}
\begin{center}
$6\sum_{u\sim v\in OR_1(A_2)}(d_u+d_v)+2\sum_{u\sim v\in OR_2(A_2)}(d_u+d_v)+$
\end{center}
\begin{center}
$\sum_{u\sim v \in \acute{E}(A_2)}(d_u+d_v)-8\sum_{u\sim v \in \acute{\acute{E}}(A_2)}(d_u+d_v)=$
\end{center}
\begin{center}
$6\times 32+2\times 34+6-8\times 6=218$.
\end{center}
Based on FIGURE $1(c)$, if $i=3$, then there are $no_3=10$ outer rings, in which $6$ of the outer rings are in $OR_1(A_3)$ and $4$ of outer rings are in $OR_2(A_3)$. In addition, there are $2$ edges in $\acute{E}(A_3)$ and $10$ edges in $\acute{\acute{E}}(A_3)$. Thus, same as previous computations for $M_1(A_i)$, $i\in \{1,2\}$, one can get 
\begin{center}
$M_1(A_3)=\sum_{u\sim v\in E(A_3)} (d_u+d_v)=6OR_1(A_3)+4OR_2(A_3)+2\acute{E}(A_3)-10\acute{\acute{E}}(A_3)=$
\end{center}
\begin{center}
$6\sum_{u\sim v\in OR_1(A_3)}(d_u+d_v)+4\sum_{u\sim v\in OR_2(A_3)}(d_u+d_v)+$
\end{center}
\begin{center}
$2\sum_{u\sim v \in \acute{E}(A_3)}(d_u+d_v)-10\sum_{u\sim v \in \acute{\acute{E}}(A_3)}(d_u+d_v)=$
\end{center}
\begin{center}
$6\times 32+4\times 34+2\times 6-10\times 6=280$.
\end{center}
Now, let $i\geqslant 4$.\\
We assume that the claim holds for graphs $A_i$, $i\in \{4,5,...\}$. As a result we obtain:
\begin{center}
$|OR_1(A_i)|=6, |OR_2(A_i)|=2i-2, |\acute{E}(A_i)|=i-1$ and $|\acute{\acute{E}}(A_i)|=2i+4$, where $i\in \{1,2,3\}$.
\end{center}
Hence, we attain the first-Zagreb index of Circumacenes $(C_{(\frac{8i+16}{3})}H_{(\frac{2i+22}{3})})$ as follow:
\begin{center}
$M_1(A_i)=\sum_{u\sim v\in E(A_i)} (d_u+d_v)=OR_1(A_i)+OR_2(A_i)+\acute{E}(A_i)-\acute{\acute{E}}(A_i)=$
\end{center}
\begin{center}
$6\sum_{u\sim v\in OR_1(A_i)}(d_u+d_v)+(2i-2)\sum_{u\sim v\in OR_2(A_i)}(d_u+d_v)+$
\end{center}
\begin{center}
$(i-1)\sum_{u\sim v \in \acute{E}(A_i)}(d_u+d_v)-(2i+4)\sum_{u\sim v \in \acute{\acute{E}}(A_i)}(d_u+d_v)=$
\end{center}
\begin{center}
$6\times 32+(2i-2)\times 34+(i-1)\times 6-(2i+4)\times 6=62i+94=2(31i+47)$.
\end{center}
The proof is followed.
\end{proof}

\begin{theorem}
If $A_i$, $i\in \{1,2,...\}$ are simple molecular graphs of Circumacenes $(C_{(\frac{8i+16}{3})}H_{(\frac{2i+22}{3})})$, then for $i\in \{1,2,...\}$
\begin{center}
$HM(A_i)=2276i+3652$
\end{center}
\begin{center}
$SC(A_i)=H_{-\frac{1}{2}}(A_i)=-(0.066i+1.326)$
\end{center}
\begin{center}
$H(A_i)=2H_{-1}(A_i)=-2(0.108i+0.702)$.
\end{center}
\end{theorem}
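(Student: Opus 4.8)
The plan is to exploit the fact that all three quantities are specializations of the general sum-connectivity index $H_\alpha(A_i)=\sum_{u\sim v}(d_u+d_v)^\alpha$ introduced above: indeed $HM=H_2$, $SC=H_{-\frac12}$, and $H=2H_{-1}$. Consequently I would not build a new combinatorial decomposition from scratch, but instead reuse verbatim the edge partition established in the preceding theorem (the computation of $M_1(A_i)$), namely the splitting of $E(A_i)$ into the outer-ring classes $OR_1(A_i)$, $OR_2(A_i)$, the non-outer-ring edges $\acute{E}(A_i)$, and the shared outer-ring edges $\acute{\acute{E}}(A_i)$, together with the cardinalities $|OR_1(A_i)|=6$, $|OR_2(A_i)|=2i-2$, $|\acute{E}(A_i)|=i-1$, and $|\acute{\acute{E}}(A_i)|=2i+4$. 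Since every edge of a circumacene joins two vertices of degree $2$ or $3$, only the sum-degrees $4,5,6$ occur, and the partition records how many edges realise each value; once that multiset is known, every $H_\alpha$ is obtained by raising each sum-degree to the power $\alpha$ and weighting by its multiplicity.

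For the hyper-Zagreb index $HM=H_2$ I would mimic the linear computation of $M_1$, but replace the ring invariant $Sd_r$ by its quadratic analogue, the sum of $(d_u+d_v)^2$ over the edges of a ring. The point requiring care is that $Sd_r$ alone does not determine $\sum(d_u+d_v)^2$ over a ring, because $(\sum (d_u+d_v))^2\neq\sum(d_u+d_v)^2$; hence I would first fix, for each ring type contributing to $OR_1$ and $OR_2$, the internal breakdown into $(2,2)$, $(2,3)$ and $(3,3)$ edges, compute the per-ring sum of squares from that breakdown, and then run the same inclusion--exclusion over the $2i+4$ shared edges as before. Because all four cardinalities are affine in $i$, the resulting total is affine in $i$, so matching it against the base values at $i=1,2,3$ (exactly as in the previous proof) pins down the closed form $HM(A_i)=2276i+3652$.

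For $SC=H_{-\frac12}$ and $H=2H_{-1}$ the scheme is identical, with the exponent $\alpha$ set to $-\tfrac12$ and $-1$ respectively: each occurring sum-degree $4,5,6$ contributes $4^{\alpha},5^{\alpha},6^{\alpha}$, and I would sum these with the multiplicities supplied by the partition, finally doubling in the harmonic case.

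The step I expect to be the main obstacle is reconciling the signed inclusion--exclusion bookkeeping of the $M_1$ proof with the nonlinear exponents, and in particular verifying positivity. For the linear index $M_1$ the subtraction of the over-counted shared edges $\acute{\acute{E}}(A_i)$ is harmless, but for $\alpha\notin\{0,1\}$ one must ensure that every physical edge is counted once with its own weight $(d_u+d_v)^\alpha$ rather than inheriting a net negative contribution from the subtraction. This matters because $SC$ and $H$ are sums of strictly positive terms and must therefore be positive and increasing in the edge count; a negative closed form would signal that the shared-edge subtraction has been transplanted illegitimately into the $\alpha\neq1$ sums. I would therefore fix a genuinely non-overlapping edge partition, recompute the three base cases numerically, and only then trust the affine extrapolation.
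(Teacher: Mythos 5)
Your skeleton is in fact exactly the paper's proof: the paper reuses the Theorem~2.1 decomposition and simply declares that ``it is enough to replace $(d_u+d_v)$ by $(d_u+d_v)^2$, $(d_u+d_v)^{-\frac{1}{2}}$, $(d_u+d_v)^{-1}$,'' with no further computation. But the caution you raise about $Sd_r$ is precisely where the paper goes wrong, and it is worth seeing that your objection is not merely hypothetical: the paper's numbers are obtained by applying the power $\alpha$ to the per-ring \emph{aggregates}, i.e.\ substituting $32\mapsto 32^{\alpha}$, $34\mapsto 34^{\alpha}$, $6\mapsto 6^{\alpha}$ into the signed expression from Theorem~2.1. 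Indeed
\[
6\cdot 32^{2}+(2i-2)\cdot 34^{2}+(i-1)\cdot 6^{2}-(2i+4)\cdot 6^{2}=2276i+3652,
\]
and the same substitution with $\alpha=-\tfrac12$ and $\alpha=-1$ reproduces $-(0.066i+1.326)$ and $-2(0.108i+0.702)$ exactly, negative signs included. So the theorem's formulas are generated by the $(\sum x)^{\alpha}\neq\sum x^{\alpha}$ fallacy you flagged, compounded by transplanting the signed inclusion--exclusion into nonlinear exponents.

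The consequence your proposal stops just short of drawing is that your (correct) plan cannot terminate in the stated closed forms, because those closed forms are false; the final step ``matching against the base values pins down $HM(A_i)=2276i+3652$'' fails already at $i=1$. Carrying out your own recipe on $A_1$ (coronene, the paper's base case with $M_1(A_1)=156$): the edge multiset is six $(2,2)$ edges, twelve $(2,3)$ edges and twelve $(3,3)$ edges, so $HM(A_1)=6\cdot 16+12\cdot 25+12\cdot 36=828\neq 5928=2276+3652$, while $SC(A_1)=\tfrac{6}{2}+\tfrac{12}{\sqrt5}+\tfrac{12}{\sqrt6}\approx 13.27$ and $H(A_1)=2\left(\tfrac{6}{4}+\tfrac{12}{5}+\tfrac{12}{6}\right)=11.8$ --- positive, as your positivity argument demands for sums of strictly positive terms. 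More generally the degree-based edge partition is $m_{22}=6$, $m_{23}=4i+8$, $m_{33}=7i+5$ (consistent with $|E(A_i)|=11i+19$ and with $4m_{22}+5m_{23}+6m_{33}=62i+94=M_1(A_i)$), which yields $HM(A_i)=352i+476$, $SC(A_i)\approx 4.647i+8.619$ and $H(A_i)\approx 3.933i+7.867$. So treat your positivity check as a disproof rather than a sanity check: your method is the right way to compute these indices, but it proves corrected formulas, and the statement as printed can only be ``derived'' via the illegitimate aggregate-power substitution you explicitly excluded.
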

\begin{proof}
As we saw in Theorem 2.1, for $i\in \{1,2,...\}$ 
\begin{center}
$M_1(A_i)=6OR_1(A_i)+(2i-2)OR_2(A_i)+(i-1)\acute{E}(A_i)-(2i+4)\acute{\acute{E}}(A_i)$.
\end{center}
To calculate $HM(A_i)$, $SC(A_i)$ and $H(A_i)$, $i\in \{1,2,...\}$, it is enough to replace $(d_u+d_v)^2$, $(d_u+d_v)^{-\frac{1}{2}}$ and $(d_u+d_v)^{-1}$ instead of $(d_u+d_v)$. Clearly, it is proved for $i\in \{1,2,...\}$. Therefore, 
\begin{center}
$HM(A_i)=2276i+3652$
\end{center}
\begin{center}
$SC(A_i)=H_{-\frac{1}{2}}(A_i)=-(0.066i+1.326)$
\end{center}
\begin{center}
$H(A_i)=2H_{-1}(A_i)=-2(0.108i+0.702)$.
\end{center}
\end{proof}
\begin{theorem}
If $A_i$, $i\in \{1,2,...\}$ are simple molecular graphs of Circumacenes $(C_{(\frac{8i+16}{3})}H_{(\frac{2i+22}{3})})$, then $M_2(A_i)=3(29i+66)$.
\end{theorem}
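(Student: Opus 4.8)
The plan is to run the same argument as in Theorem 2.1, but to record on each ring the \emph{times-degree} $Td_r$ (the sum of the products $d_ud_v$ over the edges of that ring) in place of the sum-degree $Sd_r$. Since $M_2(A_i)=\sum_{u\sim v}d_ud_v$, adding $Td_r$ over all outer rings counts every fusion edge (one shared by two outer rings) twice, so the identical inclusion--exclusion bookkeeping used for $M_1$ applies and yields
\[
M_2(A_i)=|OR_1|\,Td_r^{(1)}+|OR_2|\,Td_r^{(2)}+|\acute{E}|\,p_{\acute{E}}-|\acute{\acute{E}}|\,p_{\acute{\acute{E}}},
\]
where $Td_r^{(1)},Td_r^{(2)}$ are the per-ring product-degrees of the two ring types and $p_{\acute{E}},p_{\acute{\acute{E}}}$ are the per-edge products on $\acute{E}$ and $\acute{\acute{E}}$.

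First I would read the four per-unit constants off Figure 1. A type-1 ring ($Sd_r=32$) carries four degree-$3$ and two degree-$2$ vertices, and a type-2 ring ($Sd_r=34$) carries five degree-$3$ and one degree-$2$ vertex; evaluating $\sum_{\text{ring}}d_ud_v$ on each then gives $Td_r^{(1)}$ and $Td_r^{(2)}$. The members of $\acute{E}$ and $\acute{\acute{E}}$ are fusion bonds joining two degree-$3$ vertices, so $p_{\acute{E}}=p_{\acute{\acute{E}}}=9$.

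Next I would substitute the cardinalities already fixed in Theorem 2.1, namely $|OR_1|=6$, $|OR_2|=2i-2$, $|\acute{E}|=i-1$ and $|\acute{\acute{E}}|=2i+4$, collapsing $M_2(A_i)$ to a single expression linear in $i$, and then confirm it against the cases $i=1,2,3$ exactly as was done for $M_1$.

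The hard part will be pinning down the constant term, which is governed entirely by the base case $i=1$: there all six rings are of type~$1$ while $OR_2$ and $\acute{E}$ are empty, so $M_2(A_1)=6\,Td_r^{(1)}-6\cdot 9$. Because $Td_r$ (unlike $Sd_r=2\sum_v d_v$ on a hexagon) is sensitive to the cyclic \emph{order} of the degrees, a misreading of whether the two degree-$2$ vertices of a type-1 ring are adjacent shifts $Td_r^{(1)}$ between $42$ and $43$ and moves the whole constant accordingly. I would therefore cross-check $M_2(A_1)$ against the independent degree-pair count $M_2=4n_{(2,2)}+6n_{(2,3)}+9n_{(3,3)}$ on the $i=1$ graph before trusting the general formula; once that base value is secured, the coefficient of $i$ follows automatically from the $2i$-growth of $|OR_2|$.
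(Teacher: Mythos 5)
Your plan coincides with the paper's own proof almost step for step: the paper partitions $E(A_i)$ into $OR_1^*,OR_2^*,\acute{E}^*,\acute{\acute{E}}^*$ by times-degree, takes over the cardinalities $|OR_1^*|=6$, $|OR_2^*|=2i-2$, $|\acute{E}^*|=i-1$, $|\acute{\acute{E}}^*|=2i+4$ from Theorem 2.1, and evaluates exactly the sum you write down, with the same per-unit constants: $Td_r^{(1)}=43$ (your reading is right --- in the corner rings the two degree-$2$ vertices are adjacent, so $43$, not $42$; the paper's definition of $OR_1^*$ says ``$Td_r=34$'' but this is a digit transposition, since its computation line uses $6\times 43$), $Td_r^{(2)}=48$, and $9$ for the fusion bonds. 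So methodologically you have reproduced the paper's argument, and your two safeguards (sensitivity of $Td_r$ to the cyclic order of degrees, and the independent check $M_2=4n_{(2,2)}+6n_{(2,3)}+9n_{(3,3)}$) are precisely the right instincts.

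The problem is what happens when you carry the plan out: it does not prove the statement as quoted, because the stated closed form is arithmetically wrong. Your base case gives $M_2(A_1)=6\cdot 43-6\cdot 9=204$ (which the degree-pair count on coronene confirms: $12$ edges of type $(3,3)$, $12$ of type $(2,3)$, $6$ of type $(2,2)$ give $108+72+24=204$), and in general
\[
6\cdot 43+(2i-2)\cdot 48+(i-1)\cdot 9-(2i+4)\cdot 9 \;=\; 87i+117 \;=\; 3(29i+39),
\]
whereas the theorem asserts $3(29i+66)=87i+198$, i.e.\ $285$ at $i=1$. The paper's proof displays the identical sum but then miswrites its value as $87i+198$; the slip is in that final simplification, not in your bookkeeping. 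So your promised confirmation at $i=1,2,3$ would not confirm the theorem --- it would refute it (e.g.\ ovalene gives $291=87\cdot 2+117$, not $372$). To be a correct submission, your write-up should conclude $M_2(A_i)=87i+117$ and flag the discrepancy with the printed statement, rather than aim to land on $3(29i+66)$.
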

\begin{proof}
By inspiration of the proof of Theorem 2.1, we divide edge sets $E(A_i)$ and outer ring sets $OR(A_i)$, $i\in \{1,2,...\}$ into four partitions according to times-degree of rings $(Td_r)$. Thus, the sets of $E(A_i)$ and $OR(A_i)$, $i\in \{1,2,...\}$ are as follow:
\begin{center}
$OR_1^*(A_i)=\{OR\in R(A_i), i\in \{1,2,...\}| Tdr(A_i)=34\}$,
\end{center}
\begin{center}
$OR_2^*(A_i)=\{OR\in R(A_i), i\in \{1,2,...\}| Tdr(A_i)=48\}$,
\end{center}
\begin{center}
$\acute{E}^*(A_i)=\{e=uv\in E(A_i)|e=uv \notin OR_1^*(A_i)\cup OR_2^*(A_i), i\in \{1,2,...\}, \sum _{u\sim v\in \acute{E}^*(A_i)}(d_ud_v)=9\}$
\end{center}
and
\begin{center}
$\acute{\acute{E}}^*(A_i)=\{e=uv\in E(A_i)|e=uv \in \bigcap_{i=1}^n OR_1^*(A_i), i\in \{1,2,...\}, \sum _{u\sim v\in {\acute{\acute{E}}^*(A_i)}}(d_ud_v)=9\}$.
\end{center}
In a completely similar way to the proof of Theorem 2.1, there are 
\begin{center}
$|OR_1^*(A_i)|=6, |OR_2^*(A_i)|=2i-2, |\acute{E}^*(A_i)|=i-1$ and $|\acute{\acute{E}}^*(A_i)|=2i+4$, where $i\in \{1,2,3\}$.
\end{center}
So,
\begin{center}
$M_2(A_i)=\sum_{u\sim v\in E(A_i)}(d_ud_v)=$
\end{center}
\begin{center}
$OR_1^*(A_i)+OR_2^*(A_i)+\acute{E}^*(A_i)-\acute{\acute{E}}^*(A_i)=$
\end{center}
\begin{center}
$6\sum_{u\sim v\in OR_1^*(A_i)}(d_ud_v)+(2i-2)\sum_{u\sim v\in OR_2^*(A_i)}(d_ud_v)+$
\end{center}
\begin{center}
$(i-1)\sum_{u\sim v \in \acute{E}^*(A_i)}(d_ud_v)-(2i+4)\sum_{u\sim v \in \acute{\acute{E}}^*(A_i)}(d_ud_v)=$
\end{center}
\begin{center}
$6\times 43+(2i-2)\times 48+(i-1)\times 9-(2i+4)\times 9=87i+198=3(29i+66)$.
\end{center}
The proof is completed.
\end{proof}
\begin{theorem}
If $A_i$, $i\in \{1,2,...\}$ are simple molecular graphs of Circumacenes $(C_{(\frac{8i+16}{3})}H_{(\frac{2i+22}{3})})$, then for $i\in \{1,2,...\}$
\begin{center}
$R(A_i)=R_{-\frac{1}{2}}(A_i)=-(0.045i+0.897)$,
\end{center}
\begin{center}
$M_2^*(A_i)=R_{-1}(A_i)=-(0.071i+0.437)$.
\end{center}
\end{theorem}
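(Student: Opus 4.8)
The plan is to reuse the second-Zagreb decomposition of Theorem 2.3 and pass to general exponents exactly as Theorem 2.2 passed from $M_1$ to $HM$, $SC$ and $H$. Recall that $R_\alpha(A_i)=\sum_{u\sim v}(d_ud_v)^\alpha$, so the two quantities in the statement are the special cases $R(A_i)=R_{-\frac12}(A_i)$ and $M_2^*(A_i)=R_{-1}(A_i)$, while $M_2(A_i)=R_1(A_i)$. The key observation is that the four-block partition $OR_1^*,OR_2^*,\acute{E}^*,\acute{\acute{E}}^*$ together with the multiplicities $6,\,2i-2,\,i-1$ and the subtracted $2i+4$ was built from the products $d_ud_v$ alone; it is therefore independent of $\alpha$ and serves every $R_\alpha$ simultaneously.

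First I would record the decomposition of Theorem 2.3 in the exponent form
\[
R_\alpha(A_i)=6\,t_1^{\alpha}+(2i-2)\,t_2^{\alpha}+(i-1)\,e_1^{\alpha}-(2i+4)\,e_2^{\alpha},
\]
where $t_1,t_2$ are the times-degree values attached to the outer-ring blocks $OR_1^*,OR_2^*$ and $e_1,e_2$ are the common values attached to $\acute{E}^*,\acute{\acute{E}}^*$; at $\alpha=1$ these are the numbers $43,48,9,9$ that reproduce $M_2(A_i)=3(29i+66)$ in Theorem 2.3. This is the exact analogue of the replacement $32,34,6,6\mapsto 32^\alpha,34^\alpha,6^\alpha,6^\alpha$ carried out in Theorem 2.2 for the sum-degree indices.

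Next I would specialize to $\alpha=-\tfrac12$ and then $\alpha=-1$, evaluate the four powered block values, multiply by the multiplicities, and separate the coefficient of $i$ from the constant term. Collecting the $i$-contributions of the $OR_2^*$, $\acute{E}^*$ and $\acute{\acute{E}}^*$ blocks should produce the slopes $-0.045$ and $-0.071$, while the fixed $OR_1^*$ block together with the constant parts of the other three should produce the intercepts $-0.897$ and $-0.437$; this yields $R(A_i)=-(0.045i+0.897)$ and $M_2^*(A_i)=-(0.071i+0.437)$.

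The step I expect to be the main obstacle is purely arithmetical. For negative fractional $\alpha$ the powered block values $t_1^\alpha,t_2^\alpha,e_1^\alpha,e_2^\alpha$ are irrational, so I must carry them to enough decimal places that, after weighting by $6,\,2i-2,\,i-1$ and the subtracted $2i+4$, the reported three-decimal slopes and intercepts emerge correctly. Special care is needed for the minus sign on the $\acute{\acute{E}}^*$ block, which lowers both the slope and the intercept, and for checking that the block value used for $OR_1^*$ is consistent with the one that reproduces $M_2$ at $\alpha=1$, so that the $R_1$ identity of Theorem 2.3 is respected before passing to $\alpha=-\tfrac12$ and $\alpha=-1$.
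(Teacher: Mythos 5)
Your proposal reproduces the paper's own proof strategy exactly: keep the four blocks $OR_1^*, OR_2^*, \acute{E}^*, \acute{\acute{E}}^*$ with multiplicities $6$, $2i-2$, $i-1$ and $-(2i+4)$ from Theorem 2.3, and replace the block values by their $\alpha$-th powers. But this replacement step --- the heart of both your argument and the paper's --- is mathematically invalid for $\alpha\neq 1$. The values $t_1, t_2$ attached to the outer-ring blocks are not per-edge weights: by Definition 1.2, $Td_r=\sum_{u\sim v\in R}(d_ud_v)$ is already a \emph{sum} of the six per-edge products around a ring. The correct contribution of such a ring to $R_\alpha(A_i)$ is $\sum_{e\in R}(d_ud_v)^\alpha$, and $\sum_{e}(d_ud_v)^\alpha \neq \bigl(\sum_{e}d_ud_v\bigr)^\alpha$ except at $\alpha=1$. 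For instance, a hexagonal ring whose edge products sum to $48$ contributes on the order of $6\cdot 8^{-1/2}\approx 2.1$ to $R_{-1/2}$, not $48^{-1/2}\approx 0.14$. So writing $R_\alpha(A_i)=6t_1^\alpha+(2i-2)t_2^\alpha+(i-1)e_1^\alpha-(2i+4)e_2^\alpha$ computes a different quantity altogether; the substitution is harmless only for the single-edge blocks, where the value $9=3\cdot 3$ genuinely lives on one edge. The sanity check you propose at the end (consistency with $\alpha=1$) is a good instinct --- it would in fact expose the paper's internal discrepancy, since $OR_1^*$ is defined by $Td_r=34$ but the Theorem 2.3 computation uses $43$ --- but it cannot detect this aggregation error, because at $\alpha=1$ the two readings coincide.

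There is also a decisive check that would have stopped you before any decimal arithmetic: $R_\alpha(A_i)=\sum_{u\sim v}(d_ud_v)^\alpha$ is a sum of strictly positive terms, so $R(A_i)>0$ and $M_2^*(A_i)>0$ for every graph, whereas the target formulas $-(0.045i+0.897)$ and $-(0.071i+0.437)$ are negative. No choice of block values and no precision in the final step can make the genuine Randi\'{c} and modified second-Zagreb indices come out negative, so the statement itself cannot be correct under the standard definitions, and ``deriving'' it would only confirm that the method is not computing $R_\alpha$. Indeed, carrying out your plan with the paper's numbers does not even land on the claimed constants: at $\alpha=-\tfrac12$ with $t_1=43$, $t_2=48$, $e_1=e_2=9$ one gets approximately $-0.045i-1.040$ (and with $t_1=34$, approximately $-0.045i-0.926$), neither of which is $-(0.045i+0.897)$; a similar mismatch occurs at $\alpha=-1$. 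So while your proposal is a faithful reconstruction of the paper's Theorem 2.4 proof, that proof has a genuine gap, and the honest conclusion is that the theorem as stated is false: the correct values would have to be obtained by summing $(d_ud_v)^\alpha$ edge by edge over an explicit edge partition of $A_i$ by the pair $(d_u,d_v)$, which yields positive linear expressions in $i$.
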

\begin{proof}
According to the proof of Theorem 2.3, for $i\in\{1,2,...\}$, $M_2(A_i)=$
\begin{center}
$6\sum_{u\sim v\in OR_1^*(A_i)}(d_ud_v)+(2i-2)\sum_{u\sim v\in OR_2^*(A_i)}(d_ud_v)+$
\end{center}
\begin{center}
$(i-1)\sum_{u\sim v \in \acute{E}^*(A_i)}(d_ud_v)-(2i+4)\sum_{u\sim v \in \acute{\acute{E}}^*(A_i)}(d_ud_v)$.
\end{center}
To calculate $R(A_i)$ and $M_2(A_i)$, $i\in\{1,2,...\}$, we replace $(d_ud_v)^{-\frac{1}{2}}$ and $(d_ud_v)^{-1}$ instead of $(d_ud_v)$. Obviously, it is investigated for $i\in\{1,2,...\}$. Therefore, 
\begin{center}
$R(A_i)=R_{-\frac{1}{2}}(A_i)=-(0.045i+0.897)$,
\end{center}
\begin{center}
$M_2^*(A_i)=R_{-1}(A_i)=-(0.071i+0.437)$.
\end{center}
\end{proof}
 
 \section{Results}
 
Equation $M_1(A_i)=62i+94$, $i\in \{1,2,...\}$ was used to calculate the first-Zagreb index for the five members of the Circumacenes, which are then displayed in TABLE 1.\\

\begin{table}[H]
\centering
\caption{$M_1$ index for the first five members of Circumacenes family}
\begin{tabular}{|c|c|c|c|}
\hline
Chemical Formula & IUPAC Name & Rings Number $(n_i)$ & first-Zagreb index $(M_1)$\\
\hline
$C_{24}H_{12}$ & Coronene & $7$ & $528$\\
\hline
$C_{32}H_{14}$ & Ovalene & $10$ & $714$\\
\hline
$C_{40}H_{16}$ & Circumanthracene & $13$ & $900$\\
\hline
$C_{48}H_{18}$ & Circumtetracene & $16$ & $1086$\\
\hline
$C_{56}H_{20}$ & Circumpentacene & $19$ & $1272$\\
\hline
\end{tabular}
\end{table}
We conducted calculations on Electro-optical characteristics like gap energy $(E_g)$ and electron affinity energy $(E_{ea})$ in the Circumacenes family using empirical data to examine the thermodynamic properties. These results were then compared against established sources, as detailed in TABLE 2 [14,16]. 
\begin{table}[H]
\centering
\caption{Gap and Electron affinity energy of the first five members of Circumacenes family}
\begin{tabular}{|c|c|c|}
\hline
Chemical Formula & $E_g$ & $E_{ea}$\\
\hline
$C_{24}H_{12}$ & $3.24$ & $0.96$\\
\hline
$C_{32}H_{14}$ & $2.69$ & $1.55$\\
\hline
$C_{40}H_{16}$ & $2.03$ & $1.91$\\
\hline
$C_{48}H_{18}$ & $1.54$ & $2.17$\\
\hline
$C_{56}H_{20}$ & $1.18$ & $2.37$\\
\hline
\end{tabular}
\end{table}
FIGURES 1, 2 illustrate the values of $E_g$ and $E_{ea}$ for the Circumacenes family about the first-Zagreb index, reflecting the findings of this study. FIGURES 1 and 2 demonstrate the precise calculation of $E_g$ and $E_{ea}$ using the $M_1(A)$ index variations within the Circumacenes family.

\begin{figure}[H]
\centering
\includegraphics[scale=0.5]{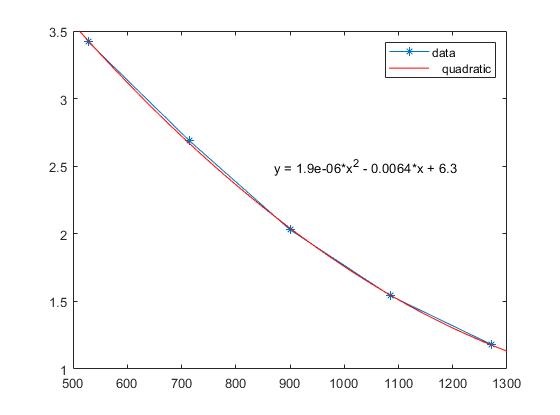}
\caption{Gap energy changes in Circumacenes family according to $M_1$ index}
\label{fig:1}       
\end{figure}
\begin{figure}[H]
\centering
\includegraphics[scale=0.5]{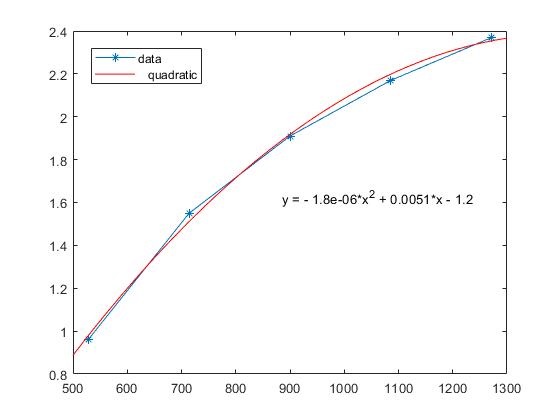}
\caption{Electron affinity energy changes in Circumacenes family according to $M_1$ index}
\label{fig:1}       
\end{figure}

These forecasts can be made using a pair of equations $(3-1)$ and $(3-2)$: 
\begin{center}
$E_{ea}=-1.8e - 06(M_1)^2 +0.0051 M_1 - 1.2$ $\hspace*{2cm}(3-1)$,       
\end{center}
\begin{center}
$E_g=1.9e - 06(M_1)^2 - 0.0064 M_1 +6.3$ $\hspace*{2.5cm}(3-2)$. 
\end{center}

\section{Conclusion}
In TABLE 3, we present the values of $E_G$, $E_{ea}$, for the Circumacenes family using the $TIM$ method (equations $(3-1)$ and $(3-2)$). This table highlights the reliability of the $TIM$ method by comparing the results with the reference values from TABLE 2. 

\begin{table}[H]
\centering
\caption{Calculation gap and electron affinity energy of Circumacenes family through $TIM$ method}
\begin{tabular}{|c|c|c|}
\hline
Chemical Formula & $E_g$ & $E_{ea}$\\
\hline
$C_{24}H_{12}$ & $3.4504896$ & $0.9909888$\\
\hline
$C_{32}H_{14}$ & $2.6990124$ & $1.5237672$\\
\hline
$C_{40}H_{16}$ & $2.079$ & $1.932$\\
\hline
$C_{48}H_{18}$ & $1.5904524$ & $2.2156872$\\
\hline
$C_{56}H_{20}$ & $1.2333696$ & $2.3477478$\\
\hline
\end{tabular}
\end{table}
 
The $TIM$ method, along with the first-Zagreb index, allows for accurate prediction of various physical and chemical properties within the Circumacenes family, saving time and resources compared to traditional theoretical and experimental approaches that often yield only approximate results. Moving forward, several heavier members of the Circumacenes family will undergo analysis using the $TIM$ method. Equations $(3-1)$ and $(3-2)$ have been employed to forecast the energies of $E_g$ and $E_{ea}$ with the outcomes detailed in TABLE 4. 
\begin{table}[H]
\centering
\caption{Prediction gap and electron affinity energy of Circumacenes family through $TIM$ method}
\begin{tabular}{|c|c|c|c|}
\hline
Chemical Formula & $M_1$ index & $E_{ea}$ & $E_g$\\
\hline
$C_{24}H_{12}$ & 1458 & $2.4094248$ & $1.0077516$\\
\hline
$C_{32}H_{14}$ & 1644 & $2.3194752$ & $0.9135984$\\
\hline
$C_{40}H_{16}$ & 1830 & $2.10498$ & $0.95091$\\
\hline
$C_{48}H_{18}$ & 2016 & $1.7659392$ & $1.1196864$\\
\hline
$C_{56}H_{20}$ & 2202 & $1.3023528$ & $1.4199279$\\
\hline
\end{tabular}
\end{table}

Somayeh Shafiee Alamoti \\ Department of Mathematics, Islamic Azad University, South Tehran Branch, \\ Tehran, Iran. \\ e-mail: shafiee.s88@gmail.com \\ Phone: +989126994492
 \bigskip \noindent 
 
Zohreh Aliannejadi \\ Department of Mathematics, Islamic Azad University, South Tehran Branch, \\ Tehran, Iran. \\ e-mail: z\_alian@azad.ac.ir \\ phone: +989122797710
 \bigskip \noindent 

 \end{document}